\pgfplotsset{compat=1.15}
\newcommand{\dottedqw}[1][-1]{\ar @{.} [0,#1]}
\newcommand{\dottedmeter}{*=<1.8em,1.4em>{\xy ="j","j"-<.778em,.322em>;{"j"+<.778em,-.322em> \ellipse ur,_{}},"j"-<0em,.4em>;p+<.5em,.9em> **\dir{-},"j"+<2.2em,2.2em>*{},"j"-<2.2em,2.2em>*{} \endxy} \POS ="i","i"+UR;"i"+UL **\dir{-};"i"+DL **\dir{-};"i"+DR **\dir{-};"i"+UR **\dir{-},"i" \dottedqw}
\newcommand{\dottedgate}[1]{*+<.6em>{#1} \POS ="i","i"+UR;"i"+UL **\dir{-};"i"+DL **\dir{-};"i"+DR **\dir{-};"i"+UR **\dir{-},"i" \dottedqw}
\newcommand{\dottedctrl}[1]{\control \qwx[#1] \dottedqw}
\newcommand{\cgate}[1]{*+<.6em>{#1} \POS ="i","i"+UR;"i"+UL **\dir{-};"i"+DL **\dir{-};"i"+DR **\dir{-};"i"+UR **\dir{-},"i" \cw}
\theoremstyle{plain}
\newtheorem{theorem}{Theorem}
\theoremstyle{definition}
\newtheorem{definition}{Definition}
\newcommand{\bb}[1]{\ensuremath{\mathbb{#1}}}
\newcommand{\mc}[1]{\ensuremath{\mathcal{#1}}}
\newcommand{\mf}[1]{\ensuremath{\mathfrak{#1}}}
\def\ft#1{\def\tempa{#1}\futurelet\next\ft@i}
\def\ft@i{\ifx\next\bgroup\expandafter\ft@ii\else\expandafter\ft@end\fi}
\def\ft@ii#1{{\ensuremath{\mf{F}_{\tempa}\!\left(#1\right)}}}
\def\ft@end{{\ensuremath{\mf{F}_{\tempa}}}}
\def\ketbra#1{\def\tempa{#1}\futurelet\next\ketbra@i}
\def\ketbra@i{\ifx\next\bgroup\expandafter\ketbra@ii\else\expandafter\ketbra@end\fi}
\def\ketbra@ii#1{\left| \tempa \middle\rangle\!\middle\langle #1 \right|}
\def\ketbra@end{\left| \tempa \middle\rangle\!\middle\langle \tempa \right|}
\def\braket#1{\def\tempa{#1}\futurelet\next\braket@i}
\def\braket@i{\ifx\next\bgroup\expandafter\braket@ii\else\expandafter\braket@end\fi}
\def\braket@ii#1{\left\langle \tempa \middle| #1 \right\rangle}
\def\braket@end{\left\langle \tempa \middle| \tempa \right\rangle}
\def\cketbra#1{\def\tempa{#1}\futurelet\next\cketbra@i}
\def\cketbra@i{\ifx\next\bgroup\expandafter\cketbra@ii\else\expandafter\cketbra@end\fi}
\def\cketbra@ii#1{\left| \tempa \middle)\!( #1 \right|}
\def\cketbra@end{\left| \tempa \middle)\!\middle( \tempa \right|}
\def\cbraket#1{\def\tempa{#1}\futurelet\next\cbraket@i}
\def\cbraket@i{\ifx\next\bgroup\expandafter\cbraket@ii\else\expandafter\cbraket@end\fi}
\def\cbraket@ii#1{\left( \tempa \middle| #1 \right)}
\def\cbraket@end{\left( \tempa \middle| \tempa \right)}
\def\dketbra#1{\def\tempa{#1}\futurelet\next\dketbra@i}
\def\dketbra@i{\ifx\next\bgroup\expandafter\dketbra@ii\else\expandafter\dketbra@end\fi}
\def\dketbra@ii#1{| \tempa \rangle\!\rangle\!\langle\!\langle #1 |}
\def\dketbra@end{| \tempa \rangle\!\rangle\!\langle\!\langle \tempa |}
\def\dbraket#1{\def\tempa{#1}\futurelet\next\dbraket@i}
\def\dbraket@i{\ifx\next\bgroup\expandafter\dbraket@ii\else\expandafter\dbraket@end\fi}
\def\dbraket@ii#1{\langle\!\langle \tempa | #1 \rangle\!\rangle}
\def\dbraket@end{\langle\!\langle \tempa | \tempa \rangle\!\rangle}
\def\CJ/{Choi-Jamio{\l}kowski}
\begin{document}

\title{Randomized compiling in fault-tolerant quantum computation}
\author{Stefanie J. Beale}
\affiliation{Keysight Technologies Canada, Kanata, ON K2K 2W5, Canada}
\affiliation{Institute for Quantum Computing, University of Waterloo, Waterloo, Ontario N2L 3G1, Canada}
\affiliation{Department of Physics and Astronomy, University of Waterloo, Waterloo, Ontario N2L 3G1, Canada}
\author{Joel J. Wallman}
\affiliation{Keysight Technologies Canada, Kanata, ON K2K 2W5, Canada}
\affiliation{Institute for Quantum Computing, University of Waterloo, Waterloo, Ontario N2L 3G1, Canada}
\affiliation{Department of Applied Mathematics, University of Waterloo, Waterloo, Ontario N2L 3G1, Canada}

\date{\today}

\begin{abstract}
Studies of quantum error correction (QEC) typically focus on stochastic Pauli errors because the existence of a threshold error rate below which stochastic Pauli errors can be corrected implies that there exists a threshold below which generic errors can be corrected.
However, rigorous estimates of the threshold for generic errors are typically orders of magnitude worse than the threshold for stochastic Pauli errors.
Specifically, coherent errors have a particularly harmful impact on the encoded space because they can map encoded states to superpositions of logical and error states. 
Further, coherent errors can add up and interfere over multiple rounds of error correction or between syndrome measurements, which may result in significantly worse errors than expected under a stochastic Pauli error model.
In this paper, we present an algorithm which decoheres noise at the logical level, projecting the state of the system onto a logical state with a well-defined error.
The algorithm does not significantly increase the depth of the logical circuit (and usually does not lead to any increase in depth), and applies generally to most fault-tolerant gadgets and error correction steps.
\end{abstract}

\maketitle

Studies of quantum error correcting codes often assume a noise model which is relatively easy to model and correct effectively.
In most cases, this is a probabilistic Pauli noise model on the encoded space, with ideal syndrome extraction~\cite{Darmawan2017, Fern2008, Poulin2006}, however, recently there have been many numerical studies of relatively simple coherent noise models, which result in a wide range of effective error rates~\cite{Iyer2017, Huang2019, Chamberland2017}.
In cases where noisy syndrome extraction is considered, it typically follows a model where syndromes are probabilistically misreported because this model is straightforward to simulate~\cite{Skoric2023, Fern2008, Poulin2006, Darmawan2017, Iyer2017, Huang2019, Chamberland2017}.

Quantum systems can be used for more efficient computations because they can be in coherent superpositions of states.
When written as an expansion in terms of outer products of the superposed states, these coherences are represented by cross-terms between the states, for example $\ketbra{0}{1}$ or $\ketbra{1}{0}$ have coherence between $\ket{0}$ and $\ket{1}$.
Unintended coherences can cause noise that is difficult to model and detrimental to the computation as they can interfere and build up quickly.
Ideal measurements remove coherences (that is, the cross-terms) between eigenstates of the measured operator.
For example, if we begin in a superposed state of $\ket{0}$ and $\ket{1}$, e.g.
\begin{align}
    \ketbra{+} = \frac{1}{2}\left(\ketbra{0}{0}+\ketbra{0}{1}+\ketbra{1}{0}+\ketbra{1}{1}\right)
\end{align}
and measure in the computational basis (i.e. measure the $Z$ operator), we remove the cross terms $\ketbra{0}{1}$ and $\ketbra{1}{0}$.
Similarly, averaging over powers of $Z$ removes coherences between $\ket{0}$ and $\ket{1}$ (the eigenspaces of $Z$)~\cite{Beale2023}.

In quantum error correction, a state is encoded in a subspace of a larger physical space~\cite{gottesman2009introduction}.
We refer to this subspace as the codespace, and for stabilizer codes, there are a set of subspaces isomorphic to the codespace which we refer to as cospaces.
Ideal syndrome measurement allows us to project onto individual cospaces and distinguish which cospace the state is in after measurement, and we apply a correction based on this information.
We refer to coherence between states in different spaces as coherences between those spaces.
When non-trivial coherences arise in a stabilizer code, they take the form of coherences within cospaces and coherences between cospaces.
There are three types of error which deviate from the standard assumptions about noise in QECCs that we are concerned about:
\begin{enumerate}
    \item Measurement errors other than classical bit flips;
    \item Coherence within a cospace;
    \item Coherence between cospaces.
\end{enumerate}
Much like with the example given above for coherences between $\ket{0}$ and $\ket{1}$, ideal syndrome measurement removes coherences between cospaces.
However, syndrome measurement may not be ideal, and errors can build up and interfere during the syndrome measurement process.
We therefore need to address all three of these deviations rather than letting syndrome measurement handle coherence between cospaces.
In this paper, we propose a technique called Logical Randomized Compiling (LRC), in which we address each of these problems as follows: we address measurement errors by applying the compilation method proposed in Ref.~\cite{Beale2023} to physical measurements, and a generalization thereof to logical measurements; we address coherences within cospaces by applying a generalization of RC~\cite{rcWallman2016a} to a logical (usually Pauli) twirling group; and we address coherences between cospaces by applying random stabilizers between operations (a more detailed breakdown of how each type of gadget is compiled is given in \cref{sec:overview}).
We show that averaging over the application of random powers of stabilizers removes coherences between cospaces (which are eigenspaces of the stabilizers), analogous to how averaging over the application of random powers of $Z$ operations decoheres coherences between eigenspaces of $Z$.

Previous work has looked at the impact of RC~\cite{rcWallman2016a} on the performance of QECs \cite{Beale2018,Jain2023}.
In the absence of twirling, residual coherence can be more efficiently corrected using non-Pauli logical operations~\cite{Beale2018}.
This possibility can still be used with LRC if the logical twirl is omitted.
RC has been shown to improve the logical fidelity for a wide class of errors, however, it can sometimes decrease the logical fidelity by removing contributions from the physical noise that interfere in a helpful way~\cite{Jain2023}.
Moreover, it is hard to predict whether the coherent contributions will interfere constructively or destructively.
Another study, Ref. \cite{Greene2021}, proposed inserting random stabilizers to mimic syndrome measurement (without getting a returned syndrome) and demonstrated that it can improve the logical error rate in the absence of error correction, but did not fully analyze the technique.
Notably, as shown in \cref{thm:randomStabsProject}, the method proposed in ref. \cite{Greene2021} removes coherences between cospaces but does not address coherences within cospaces.
As in Ref.~\cite{Greene2021}, we propose inserting stabilizers throughout a computation to enforce a noise model which is more predictable and less harmful, while still using syndrome extraction and error correction to correct the remaining error.
We include a mathematical analysis of the effect of applying random stabilizers, and combine this randomization with generalizations of other noise tailoring techniques~\cite{rcWallman2016a, Beale2023}.

The paper is structured as follows.
We begin by illustrating the potential for coherent errors that cannot be removed by existing techniques in \cref{sec:coherentExample}.
We then define LRC in \cref{sec:overview} and give a qualitative explanation for how it removes undesired coherences.
We next introduce the mathematical tools and notation in \cref{sec:Preliminaries}.
We then prove that LRC removes undesired coherences in \cref{sec:Proof}.
We illustrate how our technique is applied to syndrome extraction circuits in \cref{sec:SyndromeExtraction}.
We conclude with some implementation details and open questions in \cref{sec:Discussion}.

\section{Example: Coherent errors in the repetition code}\label{sec:coherentExample}

In this section, we provide an example where LRC is helpful and RC cannot be applied.
Specifically, we look at a noisy gadget implementation of a Toffoli gate for the 3-qubit bit-flip code, and show that a well-motivated noise model can produce an error which creates coherences between cospaces that can be suppressed by LRC.
RC is not applicable in this setting because the Toffoli gate is a non-Clifford gate.

Recall the 3-qubit bit-flip code, which has the codewords $\ket{\bar{0}}=\ket{000}$ and $\ket{\bar{1}}=\ket{111}$.
This code can be described by the stabilizer generators $ZZI$ and $IZZ$ and the logical Pauli operations $\bar{I}=III$, $\bar{X}=XXX$, $\bar{Y}=-YYY$, and $\bar{Z}=ZZZ$.
We denote a rotation about a Pauli $P$ by an angle $\theta$ by 
\begin{align}
    P_\theta=\exp(-i\theta P) = \cos \theta I - i \sin \theta P.
\end{align}
Conjugating a state by $P_\theta$ results in cross-terms between the initial state and the state that $P$ maps to.
In stabilizer codes, stabilizers act trivially, logical Pauli operators map within a cospace, and error operators map between cospaces.
Therefore, coherent stabilizer operations (that is, errors of the form $S_\theta=\exp(-i\theta S)$ for a stabilizer $S$) are trivial, coherent logical Pauli operations create coherences within a cospace, and coherent errors create coherences between cospaces.

The Toffoli gate is the three-qubit gate
\begin{align}
    CCX&=\sum_{i,j\in\bb{Z}_2} \ketbra{ij}\otimes X^{ij}.
\end{align}
The corresponding ideal logical gate is
\begin{align}
    \overline{CCX}&=\sum_{i,j\in\bb{Z}_2} \ketbra{\bar{i}\bar{j}}\otimes \bar{X}^{ij}.
\end{align}
In the 3-qubit repetition code, the logical Toffoli gate can be implemented by a gate gadget composed of transversal $CCX$ gates, i.e.,
\begin{align}
    \overline{CCX} = CCX_{(1,1), (2,1), (3,1)}CCX_{(1,2), (2,2), (3,2)}CCX_{(1,3), (2,3), (3,3)},
    \label{eq:ToffGadget}
\end{align}
and the subscripted indices label which qubits the Toffoli gates act on, with a tuple $(a, b)$ indicating the $b^{th}$ qubit in the $a^{th}$ code block.

We will look at an example where the conditional $X$ in one of the physical Toffoli gates in the Toffoli gadget is overrotated.
To model this overrotation, note that $X \cong X_{\pi/2}$ up to a phase.
Then we can express an overrotation of an $X$ gate by an angle $\delta$ as $X_{\pi/2+\delta}=\exp(-i(\pi/2+\delta)X)=X\cdot X_\delta$.
A Toffoli gate overrotated by an angle $\delta$ is then given by
\begin{align}
    CCX(\delta)&=\sum_{i,j\in\bb{Z}_2} \ketbra{ij}\otimes (X\cdot X_\delta)^{ij}.
\end{align}
We will let the first Toffoli gate in the gadget defined in \cref{eq:ToffGadget} be the one which undergoes an overrotation and observe the effect when this noisy gadget acts on the $\ket{\bar{1}\bar{1}\bar{1}}$ state, that is, the case where all three code blocks are in the logical $1$ state:
\begin{align}
    CCX(\delta)_{(1,1), (2,1), (3,1)}CCX_{(1,2), (2,2), (3,2)}CCX_{(1,3), (2,3), (3,3)}\ket{\bar{1}\bar{1}\bar{1}}&=\ket{\bar{1}\bar{1}}\otimes (X\cdot X_\delta)\otimes X\otimes X\ket{\bar{1}}\\
    &=\ket{\bar{1}\bar{1}}\otimes (XII)_\delta\bar{X}\ket{\bar{1}}\\
    &=\ket{\bar{1}\bar{1}}\otimes (XII)_\delta\ket{\bar{0}}
\end{align}
In the ideal implementation, i.e. $\delta=0$, the final state is $\ket{\bar{1}\bar{1}\bar{0}}$.
The error introduced by the overrotation is then $(XII)_\delta$ on the third code block.
For the three qubit code, $XII$ is an error and so $(XII)_\delta$ maps the system to a superposition of states in different cospaces.
By applying random stabilizers, LRC removes the coherence between the cospaces, ensuring that the state is in the syndrome space.
Note that due to the symmetry of the code and the gadget implementation, changing which Toffoli gate is overrotated results in different coherent errors.

\section{Introduction to Logical Randomized Compiling}\label{sec:overview}

In this section, we will give an intuitive overview of the motivation, implementation, and impact of LRC on a fault-tolerant quantum computation.
The following sections will provide a more rigorous mathematical treatment and justification for the protocol.

Any quantum computation can be written as a sequence of three types of primitive operations:
\begin{enumerate}
\item \textit{state resets};
\item \textit{unitary operations}; and
\item \textit{measurements}.
\end{enumerate}
We call such a sequence a circuit.
The goal of an encoded implementation is to encode the operations in such a way that some errors can be detected by performing \textit{syndrome measurements}, a fourth type of operation for encoded implementations that is itself a sequence of the three primitive operations applied on the physical space.
The syndrome measurement defines a set of orthogonal spaces, that we refer to as the cospaces of the code.
An encoded implementation of a quantum operation, which we refer to as a \textit{gadget}, is a sequence of primitive operations whose composition acts on the codespace in the desired way.
To implement an error-corrected circuit, each element of the circuit is replaced by the corresponding gadget. 
Additionally, error correction gadgets are added (e.g., between each gadget arising from encoding the circuit) that consist of a syndrome  measurement followed by a recovery operator to correct the effect of any errors that occurred.
Because gadgets are the building blocks of an error-corrected (possibly fault-tolerant) implementation of a circuit, we are concerned with compiling gadgets to tailor noise into a form that is easy to handle, i.e., with no unwanted coherences and only bit flip measurement errors.

Our protocol, LRC, reduces the effect of generic Markovian errors to the form that is typically considered in QEC, making them easier to analyze and allowing standard decoders designed for stochastic Pauli errors to be used effectively.
Specifically, each type of operation is compiled to suppress deviations from the desired form of noise as follows:
\begin{enumerate}
    \item To account for measurement noise, measurements are randomly compiled~\cite{Beale2023} using logical operations.
    \item To account for coherence within a cospace, logical twirling operations are applied before and after logical unitary operations.
    \item To account for coherence between cospaces, random stabilizers are applied before and after every operation.
\end{enumerate}

More concretely, the basic building blocks of fault tolerant quantum computing are modified as follows, where stabilizers and logical operations are always selected from the current code if code switching is being used~\cite{Anderson2014}.
In all diagrams in this paper, time flows left to right, horizontal solid lines denote a logical system, dotted lines denote a physical system, and double lines denote a classical system.
When the system could be either encoded or unencoded, the convention will be to use single solid lines.

\begin{enumerate}
\item \textit{State resets}\\
A random stabilizer $S$ is applied after each encoded state reset.
This ensures that the prepared state does not have coherences between cospaces.
\begin{align}
    \Qcircuit @C=1em @R=.7em {
& \qw & \measure{\mbox{$\bar{\psi}$}} & \qw} \tag{Bare State Reset Gadget} \\
\Qcircuit @C=1em @R=.7em {
& \qw & \measure{\mbox{$\bar{\psi}$}} & \gate{S} & \qw} \tag{LRC State Reset Gadget}
\end{align}
\item \textit{Unitary operations}\\
A random stabilizer $S$ followed by a random logical operation $G$ drawn uniformly from a group $\bb{G}_U$ is applied before each encoded unitary operation $U$.
The compiled operation $U G^\dagger U^\dagger$ followed by a second random stabilizer $S'$ are applied after the encoded unitary operation.
The use of random stabilizers ensures that the state before and after the logical operation does not have coherences between cospaces.
The application of $G$ and $U G^\dagger U^\dagger$ performs a twirl as in RC~\cite{rcWallman2016a} which (with judicious selection of $\bb{G}_U$) removes coherence within the cospaces.
We discuss the selection of $\bb{G}_U$ in \cref{sec:Proof}.
For Clifford gates, we will typically select the logical Pauli group (or the logical Weyl group for qudits).
\begin{align}
    \Qcircuit @C=1em @R=.7em {
& \gate{U} & \qw} \tag{Bare Gate Gadget} \\
\Qcircuit @C=1em @R=.7em {
& \gate{G S} & \gate{U} &\gate{S' U G^\dagger U^\dagger} & \qw} \tag{LRC Gate Gadget}
\end{align}
\item \textit{Measurements}\\
Before each encoded measurement, that is each measurement of an encoded logical Weyl operation, a random stabilizer $S$ and random logical operation $L$ are performed.
The classical output of the measurement is adjusted according to the selected random logical operation.
The random stabilizer ensures that the state does not contain coherences between cospaces prior to measurement.
The inclusion of the other operations is analogous to randomized compiling for instruments~\cite{Beale2023}.

\begin{align}
    \Qcircuit @C=1em @R=.7em {
& \qw & \meter & \cw} \tag{Bare Measurement} \\
\Qcircuit @C=1em @R=.7em {
& \gate{SX^xZ^z} & \qw & \meter & \cgate{\textrm{add } x}} \tag{LRC Measurement}
\end{align}

\item \textit{Syndrome extraction}\\
A syndrome extraction measures stabilizer generators of the code by applying operations on a combined space created by the union of some readout qudits and the qudits used in the encoding and then measuring the readout qudits.
Before and after each syndrome measurement or part thereof (i.e. before and after measurement of a single bit of the syndromes), we apply a random stabilizer on the encoded space.
We have omitted some details here for ease of readability.
Specifically, the syndrome extraction circuit should be randomly compiled, and the physical measurement should undergo randomized compiling for measurements in addition to the randomization over powers of stabilizers. We will give a more complete specification in \cref{sec:SyndromeExtraction}.
\begin{align}
\Qcircuit @C=1em @R=.7em {
& \qw & \gate{ } \qwx[1] & \qw \\
& \dottedqw & \dottedgate{} & \dottedmeter & \cw
}\tag{Bare Syndrome Extraction Gadget}\\
\Qcircuit @C=1em @R=.7em {
& \gate{S} & \gate{ } \qwx[1] & \gate{S'} & \qw\\
& \dottedqw & \dottedgate{} & \dottedmeter & \cw
}\tag{LRC Syndrome Extraction Gadget}
\end{align}
\end{enumerate}

Where possible, the additional operations introduced by LRC are compiled into adjacent operations.
As the additional operations are typically transversal single-qudit operations, very few additional physical operations will be introduced by LRC. 
These modifications apply at the logical level, however protocols for mitigating or suppressing noise at the physical level can still be used for some parts of the implementation.
In particular, dynamical decoupling or refocusing techniques and physical RC could be applied to the circuit implementations of encoded operations.

\section{Preliminaries}\label{sec:Preliminaries}

We now outline the notation and background material that we use to analyze LRC.
A single-qudit system of dimension $d$ is a system whose state space is the space $\bb{H}_{d}$ of bounded linear operators from $\bb{C}^d$ to itself.
Quantum operations are linear transformations from one quantum state space to another.
We denote by $\bb{H}_{d, e}$ the space of bounded linear operators from $\bb{H}_d$ to $\bb{H}_e$, where frequently the map is from a state space to itself, in which case $e = d$.

Throughout this paper, we denote ideal implementation maps as $\mc{A}(\cdot)$, where the argument can be a state for a state preparation, a unitary operation when the channel implements a unitary operation, or a set of projectors for a measurement.
We will reserve $\rho$ for states and capital Roman letters for unitary operations, $\Pi$ for projectors and $\bb{M}$ for projector-valued measures, that is, a set of orthogonal projectors that sum to the identity.
We denote a noisy implementation map, a noisy gadget, and an ideal gadget by replacing $\mc{A}$ by $\Theta$, $\Gamma$, and $\Gamma_{\textrm{id}}$, respectively.
We also use the short-hands
\begin{align}
    A^a &= \bigotimes_{i \in \bb{Z}_n} A^{a_i},\notag\\
    \bb{E}_{s \in S} f(s) &= \frac{1}{|S|} \sum_{s \in S} f(s),
\end{align}
where $A$ is an operator, $a \in \bb{R}^n$ is a vector, $S$ is a set and $f$ is a function whose domain is $S$.

\subsection{Weyl operators}

We now introduce the Weyl operators, which are a natural way to generalize the standard Pauli operators to a higher dimensional group.
The single-qubit Pauli operators are $\{I, X, Y, Z\}$ where $I$ is the two-by-two identity matrix,
\begin{align}\label{eq:PauliXZ}
    X &= \sum_{j \in \bb{Z}_2} \ketbra{j \oplus 1}{j}, \notag\\
    Z &= \sum_{j \in \bb{Z}_2} (-1)^j \ketbra{j},
\end{align}
and $Y = i ZX$.
We can generalize these operators to qudits by extending the sums in \cref{eq:PauliXZ} to to be over $\bb{Z}_d$ and replacing the $-1$ by a $d$th root of unity, that is
\begin{align}\label{eq:WeylXZ}
    X_d &= \sum_{j \in \bb{Z}_d} \ketbra{j \oplus 1}{j}, \notag\\
    Z_d &= \sum_{j \in \bb{Z}_d} \chi_1(j) \ketbra{j},
\end{align}
where we define
\begin{align}
    \chi_a(b) = \exp\left(\frac{2 \pi i ab}{d}\right),
\end{align}
which is a $d$th root of unity for any $a, b \in \bb{Z}_d$.
The set of single-qudit Weyl operators is then $\bb{PW}_{d,1} = \{X_d^x Z_d^z : x, z \in \bb{Z}_d\}$, which is a projective representation of $\bb{Z}_d \times \bb{Z}_d$.
For convenience, we define the functions $x, z : \bb{PW}_{d, n} \to \bb{Z}_d^n$ such that $P = X^{x(P)} Z^{z(P)}$.
The set of $n$-qudit Weyl operators is $\bb{PW}_{d, n} = \bb{PW}_{d, 1}^{\otimes n}$, which can be written as
\begin{align}
    \bb{PW}_{d, n} = \{X_d^x Z_d^z : x, z \in \bb{Z}_d^n\}.
\end{align}
Explicitly, for $x, z \in \bb{Z}_d^n$, we have
\begin{align}
    X_d^x &= \sum_{j \in \bb{Z}_d^n} \ketbra{j \oplus x}{j} \notag\\
    Z_d^z &= \sum_{j \in \bb{Z}_d^n} \chi_z(j) \ketbra{j}{j},
\end{align}
where for $a, b \in \bb{Z}_d^n$ we define
\begin{align}
    \chi_{a}(b) = \prod_{i \in \bb{Z}_n} \chi_{a_i}(b_i) = \exp\left(\frac{2 \pi i a \cdot b}{d}\right).
\end{align}
Using the relation
\begin{align}
Z^z X^x = \chi_x(z) X^x Z^z,    
\end{align}
(which follows trivially from the definitions of $X$ and $Z$), we can readily verify that the projective Weyl operators satisfy
\begin{align}\label{eq:WeylBraiding}
    PQP^\dagger = \chi_P(Q)^* Q,
\end{align}
where we define
\begin{align}
    \chi_P(Q) = \chi_{x(Q)}(z(P)) \chi^*_{x(P)}(z(Q)).
\end{align}

\subsection{Stabilizer codes}

Stabilizer codes are a subclass of quantum error correcting codes for which the encoded space is specified as the space stabilized by an Abelian group $\bb{S}$, called the stabilizer group.
That is, the codespace is the mutual $+1$ eigenspace of the stabilizers,
\begin{align}
    \{\psi : S \psi = \psi \forall S\in\bb{S}\}.
\end{align}
Typically, the stabilizers are chosen to be phase multiples of Weyl operators, and for the code space to not be empty we require $\bb{S}$ to contain no nontrivial phase multiples of the identity.
However, we will work exclusively with the operator space of the codespace, and so we define the codespace to be
\begin{align}
    \bb{O} = \{\rho : S \rho S^\dagger = \rho \forall S\in\bb{S}\}.
\end{align}
The projector onto the codespace is then $\mc{A}(\Pi_I)$ where
\begin{align}
    \Pi_I = \bb{E}_{S \in \bb{S}} S
\end{align}
is the projector onto the corresponding eigenspace.
The stabilizer group for a code which encodes $k$ logical qudits of dimension $d$ in $n$ physical qudits of dimension $d$ has order $|\bb{S}| = d^{n-k}$ and can be specified by a minimal generating set $\bb{G}$ such that $\langle\bb{G}\rangle = \bb{S}$.
The simultaneous eigenspaces of $\bb{G}$ form a set of mutually orthogonal subspaces of the physical space used to encode a logical state.
We refer to the these eigenspaces as the cospaces of the code.
There is a group $\bb{T}\subsetneq\bb{PW}_{d,n}$ called pure errors such that each pure error $T\in\bb{T}$ maps a state on one eigenspace to a state on a different eigenspace.
Thus, we define the cospace $\bb{O}_T$ of the code to be
\begin{align}
    \bb{O}_T = \{T \rho T^\dagger : \rho \in \bb{O}\}.
\end{align}
The projector onto $\bb{O}_T$ is $\mc{A}(\Pi_T)$ where
\begin{align}\label{eq:CospaceProjector}
    \Pi_T = T \Pi_I T^\dagger = \bb{E}_{S \in \bb{S}} \chi_T^*(S) S,
\end{align}
and the final equality follows from \cref{eq:WeylBraiding}.
Each stabilizer code has logical Weyl operations $\bb{L}$ for which $[L,S] = [L, T] = 0\forall L\in\bb{L}, T\in\bb{T}, S\in\bb{S}$ and which have the same commutation relations as the corresponding physical operations.

The existence of mutually orthogonal cospaces allows a QEC gadget to distinguish between sets of errors that might have occurred by measuring the stabilizer generators of the code.
Each pure error (and therefore cospace) is associated with a ditstring called the error syndrome.
We define the syndrome space to be the union of the cospaces,
\begin{align}
\cup_{T \in \bb{T}} \bb{O}_T.
\label{eq:synSpaceDef}
\end{align}
Crucially, the syndrome space is a strict subspace of the whole physical space because superpositions of states in different cospaces are not in the syndrome space.

\subsection{Gauge degree of freedom}

There exists a gauge degree of freedom in how noise is expressed on operations whereby an identity operation can be inserted between consecutive operations via any invertible matrix $M$ as $M^{-1}M$ to alter the form of the noise.
Under such a transformation, a simple circuit composed of a (possibly noisy) state preparation of a state $\rho$, a (possibly noisy) unitary operation $U$, and a (possibly noisy) measurement $\bb{M}$ is transformed as follows
\begin{align}
\Theta(\rho)\Theta(U)\Theta(\bb{M})\rightarrow\Theta(\rho)M_0^{-1} M_0 \Theta(U) M_1^{-1} M_1\Theta(\bb{M}),
\end{align}
so that we get some new implementation map $\Theta'$ such that
\begin{align}
    \Theta'(\rho) &= \Theta(\rho)M_0^{-1}\\
    \Theta'(U) & = M_0\Theta(U)M_1^{-1}\textrm{, and}\\
    \Theta'(\bb{M})&=M_1\Theta(\bb{M}).
\end{align}
As with any gauge transformation, the different implementation maps produce equivalent results, that is
\begin{align}
    \Theta(\rho)\Theta(U)\Theta(\bb{M}) =\Theta'(\rho)\Theta'(U)\Theta'(\bb{M}).
\end{align}
The matrices used to change the gauge of noises before and after each operation may be non-identical, e.g. we can have $M_0 \neq M_1$, and may, for example, depend upon the start time of an instruction.
Noisy unitaries can be factored as
\begin{align}
    \Theta(U) = \Delta_L \mc{A}(U)\Delta_R,
\end{align}
where $\Delta_L$ and $\Delta_R$ are noise maps~\cite{Wallman2018}.
Some time-dependent noise processes can be removed by an appropriate gauge transformation by folding the gauge which follows the unitary operation into the noise map preceding the next operation, thereby allowing us to assume that a noisy implementation of a unitary operation and the corresponding gadget can be written as
\begin{align}
    \Theta(U) &= \mc{A}(U)\Delta_U\textrm{, or}\\
    \Gamma(U) &= \Gamma(U)\bar{\Delta}_U
    \label{eq:unitaryNoiseFactoring}
\end{align}
for some noise maps $\Delta_U$ or $\bar{\Delta}_U$.

\section{Analysis of LRC}\label{sec:Proof}

In this section, we show how averaging over the application of random stabilizers before and after an operation results in a probabilistic map between cospaces and that twirling over the logical Weyl group enforces a probabilistic noise map within cospaces.
We then combine these results and measurement RC to show that the effective noise from the resultant protocol follows the probabilistic model we want to achieve when a logical Weyl twirl can be used, and a noise model which has fewer undesired coherences when such a twirl is not possible.
We give some examples of other twirling groups to replace the twirl over logical Weyls where needful, and note that where there is not a known twirling group for the specific operation being tailored, the application of random stabilizers is still useful to remove coherences between cospaces and can be applied regardless of the form that the surrounding operations take.

We begin with a proof that, for stabilizer codes, applying uniformly random stabilizers projects the state of the system onto the syndrome space.
This projection removes coherences between cospaces of the code.
Note that if the state begins and ends on the syndrome space, the only permitted noise is within cospaces; the effective map if we apply this averaging to project onto the syndrome space before and after an operation is a probabilistic mapping between cospaces plus possible noise within each cospace.
Because the proof is given in terms of channels and does not depend on the input state, the averaging is independent of the surrounding operations under the assumption that the stabilizer averaging occurs in the same location in the encoded circuit for each randomization.

\begin{theorem}\label{thm:randomStabsProject}
    For a code with stabilizer group $\bb{S}$, we have
    \begin{align}
        \bb{E}_{S\in\bb{S}} \mc{A}(S) = \sum_{T\in\bb{T}} \mc{A}(\Pi_T).
    \end{align}
\end{theorem}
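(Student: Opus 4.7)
The plan is to decompose $\rho$ in the cospace block structure $\rho = \sum_{T_1, T_2 \in \bb{T}} \Pi_{T_1}\rho\Pi_{T_2}$ (which follows from the orthogonal sum $\sum_T \Pi_T = I$ of the syndrome eigenspaces) and show that the left-hand side acts by preserving the diagonal blocks and annihilating the off-diagonal ones---exactly what the right-hand side does. By linearity of both sides in $\rho$, this reduces the theorem to analyzing a single block $\Pi_{T_1}\rho\Pi_{T_2}$.

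For such a block, I would rewrite it as $T_1 \sigma T_2^\dagger$, where $\sigma = \Pi_I T_1^\dagger \rho T_2 \Pi_I$ is supported on the codespace by construction. Conjugating by a stabilizer $S \in \bb{S}$, I would pull $S$ through each of the three factors using the Weyl braiding relation $STS^\dagger = \chi_S(T)^* T$ of \cref{eq:WeylBraiding} and the fact that stabilizers act as the identity on codespace-supported operators (so $S\sigma S^\dagger = \sigma$, using $S\Pi_I = \Pi_I S = \Pi_I$). Multiplicativity of $\chi_S$ in its argument collapses the two prefactors $\chi_S(T_1)^*$ and $\chi_S(T_2)$ into a single $\chi_S(T_1^\dagger T_2)$, so that $S(T_1\sigma T_2^\dagger)S^\dagger = \chi_S(T_1^\dagger T_2)\, T_1\sigma T_2^\dagger$. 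Averaging over $S$ then leaves the scalar $\mathbb{E}_{S \in \bb{S}} \chi_S(T_1^\dagger T_2)$ multiplying the original block.

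The prefactor is a standard character average on the abelian group $\bb{S}$: it equals $1$ if $T_1^\dagger T_2$ commutes with every stabilizer, and $0$ otherwise. The main obstacle is verifying that this forces $T_1 = T_2$, i.e.~that the commutation pairing $(S, T) \mapsto \chi_S(T)$ between $\bb{S}$ and $\bb{T}$ is nondegenerate in the $\bb{T}$ argument. This is precisely the defining property of $\bb{T}$ as a group of pure-error representatives indexed bijectively by syndromes: any non-identity $T \in \bb{T}$ carries a nontrivial syndrome and therefore anticommutes with at least one stabilizer generator. With the off-diagonal blocks eliminated, the surviving diagonal contributions sum to $\sum_{T \in \bb{T}} \Pi_T\rho\Pi_T$, matching the right-hand side.
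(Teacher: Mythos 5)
Your proof is correct and rests on the same character-orthogonality core as the paper's, but approaches it from the opposite direction and at a different level of abstraction. The paper works entirely in the natural (Liouville) representation: it writes $\sum_T \mc{A}(\Pi_T) = \sum_T \Pi_T^* \otimes \Pi_T$, substitutes $\Pi_T = \bb{E}_{S\in\bb{S}}\chi_T^*(S)S$, and then invokes $\sum_{T\in\bb{T}}\chi_T^*(S)\chi_T(S')=\delta_{S,S'}|\bb{S}|$ to collapse the resulting double average over stabilizers to a single one, giving $\bb{E}_S\,S^*\otimes S = \bb{E}_S\mc{A}(S)$. You instead work at the state level, decomposing $\rho$ into cospace blocks $\Pi_{T_1}\rho\Pi_{T_2}$, pulling $S$ through each block via the braiding relation in \cref{eq:WeylBraiding} together with $S\Pi_I=\Pi_I$, and averaging to produce the scalar $\bb{E}_{S\in\bb{S}}\chi_S(T_1^\dagger T_2)$, which vanishes unless $T_1=T_2$. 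That last step uses the \emph{other} face of the same nondegenerate pairing between $\bb{S}$ and $\bb{T}$ that the paper uses: the paper needs the map $S \mapsto \chi_{(\cdot)}(S)$ to be injective on $\bb{S}$, while you need the map $T \mapsto \chi_{(\cdot)}(T)$ to be injective on $\bb{T}$ (equivalently, that each nontrivial pure error anticommutes with some stabilizer). Your version is a bit longer but makes the physical content more visible---the block decomposition shows explicitly that the average preserves diagonal cospace blocks and annihilates off-diagonal coherences, which is the stated interpretation of the theorem---whereas the paper's superoperator computation is more compact but less transparent about mechanism.
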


\begin{proof}
In the natural representation~\cite{WatrousText2018}, we have
\begin{align}
    \mc{A}(M) = M^* \otimes M.
\end{align}
Substituting in \cref{eq:CospaceProjector}, we have
\begin{align*}
    \sum_{T\in\bb{T}} \mc{A}(\Pi_T)&=
    \sum_{T\in\bb{T}} \Pi_T^* \otimes \Pi_T \\
    &= \bb{E}_{S, S' \in \bb{S}} S^* \otimes S' \sum_{T\in\bb{T}} \chi^*_T(S) \chi_T(S') 
\end{align*}
Now note that as $\bb{S}$ is an Abelian group and $\chi_T$ is a character of $\bb{S}$ and thus obeys the character orthogonality relations~\cite{fulton2011representation}
\begin{align}\label{eq:CyclicOrthogonality}
    \sum_{T\in\bb{T}} \chi^*_T(S) \chi_T(S') = \delta_{S, S'} |\bb{S}|.
\end{align}
Therefore, we have
\begin{align*}
    \sum_{T\in\bb{T}} \mc{A}(\Pi_T)
    &=\bb{E}_{S \in \bb{S}} S^* \otimes S \\
    &= \bb{E}_{S\in\bb{S}} \mc{A}(S),
\end{align*}
as claimed, where the factor of $|\bb{S}|$ cancels with its inverse in the average over $S'$.
\end{proof}

By \cref{thm:randomStabsProject}, averaging over applications of stabilizers is equivalent to projecting onto the syndrome space, which removes coherences between cospaces.
Now that we have a method that handles coherence between cospaces, we next look at a method to remove or reduce coherences within cospaces.
Specifically, we look at the effect of twirling an operation over a group up to a correction that preserves the operation's ideal action.
We show that when this method is applied to a noisy implementation of an operation, it can be used to address coherences within a cospace.
A standard twirl (without correction to preserve the desired action) has the following definition:
\begin{definition}
Given a group $\bb{G}$, a channel $\Lambda$, we define the twirl of $\Lambda$ over $\bb{G}$ as

\begin{align}
    \Lambda^{\circlearrowright\bb{G}} \equiv \bb{E}_{G\in\bb{G}} \mc{A}(G^\dagger)\Lambda\mc{A}(G).
\end{align}
\end{definition}
We now show that applying a corrected twirl to a noisy implementation of a unitary operation results in an effective map comprised of the ideal unitary operation followed by a twirled version of the noise.
This is useful for the prevention of coherences within cospaces because twirled operations do not induce coherences between the states that would be returned by applications of different elements of the twirling group.
We can then remove coherences within cospaces by selecting a twirling group whose elements act within a cospace.
Note that twirling only over the logical Weyl group does not address coherences between cospaces because the representation of the logical Weyl group on the physical space is not multiplicity free.
That is, states outside of the syndrome space can persist despite a twirl over the logical Weyl group.
The following is a straightforward generalization of randomized compiling~\cite{rcWallman2016a} without assuming any structure for the twirling group.

\begin{theorem}\label{thm:twirlyWhirly}
Let $\bb{G}_U$ be a group, $U$ be a unitary operator.
Averaging over specific ideal operations applied before and after a noisy encoded unitary gadget $\Gamma(U)$ produces an effective channel comprised of the ideal action of $U$ followed by the noise introduced by $\Gamma(U)$ twirled over $\bb{G}_U$.
These operations take the form
\begin{align}
    G\tag{Before}\\
    UG^\dagger U^\dagger \tag{After}
\end{align}
for $G\in\bb{G}$.
\end{theorem}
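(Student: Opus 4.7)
The plan is to proceed by direct computation using the gauge-factorized form of the noisy gadget introduced in \cref{eq:unitaryNoiseFactoring}, namely $\Gamma(U) = \mc{A}(U)\bar{\Delta}_U$ for some effective right-noise channel $\bar{\Delta}_U$. Starting from the averaged compiled expression
\begin{align}
    \bb{E}_{G \in \bb{G}_U}\,\mc{A}(UG^\dagger U^\dagger)\,\Gamma(U)\,\mc{A}(G),
\end{align}
I would substitute the factorization to obtain $\bb{E}_{G \in \bb{G}_U}\,\mc{A}(UG^\dagger U^\dagger)\,\mc{A}(U)\,\bar{\Delta}_U\,\mc{A}(G)$.

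Next, I would use that $\mc{A}$ is a homomorphism on unitary operators (composition of channels corresponds to composition of the underlying unitaries) to collapse the three leftmost factors: $\mc{A}(UG^\dagger U^\dagger)\,\mc{A}(U) = \mc{A}(UG^\dagger U^\dagger U) = \mc{A}(UG^\dagger) = \mc{A}(U)\,\mc{A}(G^\dagger)$. Pulling the $G$-independent factor $\mc{A}(U)$ out of the expectation gives
\begin{align}
    \mc{A}(U)\,\bb{E}_{G \in \bb{G}_U}\,\mc{A}(G^\dagger)\,\bar{\Delta}_U\,\mc{A}(G) \;=\; \mc{A}(U)\,\bar{\Delta}_U^{\circlearrowright \bb{G}_U},
\end{align}
by the definition of the twirl. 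This is precisely the claimed form: the ideal action of $U$ followed by the noise of $\Gamma(U)$ twirled over $\bb{G}_U$.

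The only nontrivial ingredient, and therefore the main obstacle, is justifying that the noise of $\Gamma(U)$ may be placed entirely to the right of the ideal action. That justification was already supplied by the gauge discussion leading up to \cref{eq:unitaryNoiseFactoring}: any left-noise contribution can be absorbed into $\bar{\Delta}_U$ via a gauge transformation without changing observable statistics, so we are free to assume this factorization at the outset. A secondary subtlety is that in a real experiment the twirling operations $G$ and $UG^\dagger U^\dagger$ are themselves implemented noisily; I would address this with the same gauge argument, folding their noise into a redefined $\bar{\Delta}_U$ so that the theorem as stated — about averaging over \emph{ideal} pre- and post-rotations — captures the relevant experimental quantity. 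With these points in hand, the algebraic manipulation above completes the argument.
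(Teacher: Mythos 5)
Your proof is correct and essentially identical to the paper's: factor $\Gamma(U) = \mc{A}(U)\Delta_U$, use the homomorphism property of $\mc{A}$ on unitaries to cancel $U$ against $U^\dagger$, and recognize the remaining average as the twirl $\Delta_U^{\circlearrowright\bb{G}_U}$. One small clarification: the factorization you flag as the ``nontrivial ingredient'' is in fact always available by simply defining $\Delta_U := \mc{A}(U^\dagger)\Gamma(U)$, so no gauge argument is needed to license it --- the gauge discussion in the paper motivates interpreting $\Delta_U$ as a small noise map near the identity, not establishing that the factorization exists.
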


\begin{proof}
\begin{align}
    \bb{E}_{G\in\bb{G}}\mc{A}(UG^\dagger U^\dagger)\Gamma(U)\mc{A}(G)&=\mc{A}(U)\bb{E}_{G\in\bb{G}}\mc{A}(G^\dagger)\Delta_U\mc{A}(G)\\
    &=\mc{A}(U)\Delta_U^{\circlearrowright\bb{G}},
\end{align}
where $\Delta_U \equiv \mc{A}(U^\dagger)\Gamma(U)$ can be interpreted as the noise introduced by $\Gamma(U)$, for which a noisy unitary gadget for $U$ would have an implementation map $\Gamma(U)=\mc{A}(U)\Delta_U$, consistent with \cref{eq:unitaryNoiseFactoring}.
\end{proof}
In practice, the additional operations $G$ and $UG^\dagger U^\dagger$ are compiled in to surrounding operations.
As such, we can assume that the additional operations are ideal under suitable assumptions of gate-independent noise~\cite{rcWallman2016a}.
Under gate-dependent noise, we can perturb from the average and apply \cref{thm:twirlyWhirly} provided the perturbations are small.

We now explore different options for the logical twirling group to use when applying \cref{thm:twirlyWhirly} to logical unitary operations.
Note that for the purposes of the following discussion we are omitting the random stabilizers from consideration, however, we recommend applying both techniques in combination.
A compiled version of a unitary operation $U$ of the form $(UG^\dagger U^\dagger) U G$ will always be equivalent to the bare unitary $U$, assuming that $G$ is unitary.
We can therefore in theory select any twirling group composed of unitary operations to implement the twirling component of LRC, based on what noise tailoring we wish to achieve.
In practice, however, ease of implementation matters; we generally want to select twirling operations for which the implementation of $G$ and $UG^\dagger U^\dagger$ is composed of single-qudit operations where possible so that twirling operations can be more easily compiled into neighbouring operations and introduce minimal noise where such compilation is not possible.
The suggested twirling groups, $\bb{G}_U$, given in this section assume that Weyl operations are ``easy", as is $U$.

\subsection{Clifford operations}

Clifford operations preserve the projective Weyl group, that is for any Clifford operation $C$, $\mc{A}_C(L)\in\bb{PW}_{d,n}$ $\forall L\in\bb{PW}_{d,n}$.
For any unitary operation $C$ in the Clifford group, we select $\bb{G}_C = \bb{L}$.
Then the LRC implementation of $C$ takes the form

\begin{align}
    \Qcircuit @C=1em @R=.7em {
& \gate{\Gamma(C)} & \qw} \tag{Bare Clifford Gadget} \\
\Qcircuit @C=1em @R=.7em {
& \gate{LS} & \gate{\Gamma(C)} &\gate{S'CLC^\dagger} & \qw} \tag{LRC Clifford Gadget}
\end{align}
for gates $L\in\bb{L}$.
Twirling over $\bb{L}$ results in a stochastic Weyl channel, so the noise introduced during the compiled Clifford is stochastic Weyl.

\subsection{T gates}

There are some known instances where non-Clifford gadgets can be compiled to further tailor the noise arising from the imperfect gadget implementation.
The $T$ gate is one such gate with a known twirl.
For $d=2$, we can apply operations before and after the $T$ gadget, defined as $T = \textrm{exp}(-i\pi Z/8)$, to reduce the effective noise from this gadget to a stochastic Pauli channel, with the further restriction that $p_X=p_Y$.
This compilation is possible using the dihedral group \cite{Dugas2015}, i.e. setting $\bb{G}_U = \langle \bb{PW}_{2,n}, T^2\rangle$ to perform the twirl.
Then, for any $G\in\bb{G}_U$, we can factor $G=RL$ for a rotation $R\in\langle T^2\rangle$, $L\in\bb{PW}_{2,n}$ and simplify the compilation gates to get

\begin{align}
    \Qcircuit @C=1em @R=.7em {
& \gate{\Gamma(T)} & \qw} \tag{Bare T Gadget} \\
\Qcircuit @C=1em @R=.7em {
& \gate{RLS} & \gate{\Gamma(T)} &\gate{S'LR\sqrt{Z}^{x(L)}} & \qw} \tag{LRC T Gadget}
\end{align}
where $a(L)$ for $a\in\{x,z\}$ is the power of $A$ in $L$ such that $L=X^{x(L)}Z^{z(L)}$.
For any encoding for which we can implement the Clifford group and the $T$ gate fault-tolerantly (which together form a universal gateset), we can therefore apply LRC to get a universal gateset for which the effective noise is a stochastic Weyl channel and which remains in the union of the cospaces of the code.

\subsection{Gates with no known twirl}

For operations for which there is no known useful twirling group, we can still apply the stabilizer randomization to project back onto the cospaces, thereby reducing some of the effects of coherent errors.
This amounts to setting $\bb{G}_U=\{I\}$ so that the compilation looks like

\begin{align}
    \Qcircuit @C=1em @R=.7em {
& \gate{\Gamma(U)} & \qw} \tag{Bare Gate Gadget} \\
\Qcircuit @C=1em @R=.7em {
& \gate{S} & \gate{\Gamma(U)} &\gate{S'} & \qw} \tag{LRC Gate Gadget}
\end{align}

\section{Syndrome extraction circuits}\label{sec:SyndromeExtraction}

A QEC gadget is composed of a syndrome extraction step followed by a recovery operation conditioned on the extracted syndrome.
Syndrome extraction is performed by coupling the encoded register to another register of qudits called the readout register to perform a measurement of each of the stabilizer generators.
In this section, we show how to apply LRC, focusing on syndrome extraction as it is the piece of a fault-tolerant encoding with the highest number of distinct types of operations working in tandem and so the most complicated to compile (a syndrome extraction contains unencoded measurement and state preparation as well as operations which act on an encoded and unencoded register).
Ideal syndrome extraction is a common assumption in studies of QEC because the resulting circuits are simpler and can often be efficiently simulated.
Specifically, for channels containing multiple rounds of error correction with imperfect syndrome extraction (specifically when the imperfection is not constrained to probabilistically misreported syndromes), the build up and interference of terms within and outside of the syndrome space quickly becomes complicated and expensive to track.

Under LRC, we apply random stabilizers before and after the syndrome extraction circuit to project the state of the encoded qudits onto the syndrome space.
The projection onto the syndrome space removes any coherences between cospaces so that these coherences cannot interact and build up during the syndrome extraction circuit (\cref{thm:randomStabsProject}).
We apply a corrected twirl, as in \cref{thm:twirlyWhirly}, to ensure that the effective noise for coupling operations is Pauli both on the encoded register and the readout register.
Finally, we apply measurement RC \cite{Beale2023} to the measurements of the readout qudits.

We note that in practice, a refocusing sequence and stabilizer randomization would be introduced during any idling time on the encoded register while the syndrome register is being measured.
We discuss this more in \cref{sec:idling}.
The refocusing sequence and random stabilizers applied while the readout register is being measured account for any noise that might accumulate on the encoded register during the measurement and ensure that the final state is in the syndrome space.

Though we can apply LRC to the full syndrome extraction, it is beneficial to apply it instead to the syndrome extraction performed for each individual syndrome dit.
In this way, we prevent coherences that could be removed by applying random stabilizers from persisting and interfering during subsequent operations before the full syndrome extraction is completed.
We will now step through the compilation process for a single-dit syndrome extraction, beginning with a bare syndrome extraction gadget,
\begin{align*}
&\Qcircuit @C=1em @R=.7em {
& \qw & \gate{ } \qwx[1] & \qw \\
& \dottedqw & \dottedgate{} & \dottedmeter & \cw
}
\end{align*}
where the syndrome is extracted by writing the eigenvalue of a stabilizer generator $A$ to the readout qudit using a controlled Weyl operation, so that
\begin{align}
\Qcircuit @C=1em @R=.7em {
& \gate{ } \qwx[1] & \qw \notag\\
& \dottedgate{} & \dottedqw}&=\Qcircuit @C=1em @R=.7em {
&\qw  & \gate{A} & \qw & \qw \\
 & \dottedgate{F} & \dottedctrl{-1} & \dottedgate{F^\dagger} & \dottedqw \\
}
\end{align}
The randomly compiled controlled Weyl operation (including the basis change implemented by $F$ gates) is then
\begin{align}
\Qcircuit @C=1em @R=.7em {
& \gate{ } \qwx[1] & \qw \notag\\
& \dottedgate{} & \dottedqw\gategroup{1}{2}{2}{2}{.7em}{-}}&=\Qcircuit @C=1em @R=.7em {
 & \qw & \qw & \qw & \gate{L} &\gate{A} & \gate{L^\dagger} & \qw & \qw \\
 & \dottedqw & \dottedqw & \dottedgate{F} & \dottedgate{P}  & \dottedctrl{-1} & \dottedgate{P^\dagger} & \dottedgate{G(A,L)} & \dottedgate{F^\dagger} & \dottedqw
}
\end{align}
where the $L$ operations are drawn from the logical Weyl group, $G(A,L)$ is a Weyl operator that corrects for the operation that propagates from the $L$ acting on the encoded register to the readout register via the controlled-$A$ operation, and $P$ is an element of the Weyl group on the physical register.
Applying this random compilation to the syndrome extraction circuit, we get
\begin{align*}
&\Qcircuit @C=1em @R=.7em {
& \qw & \gate{ } \qwx[1] & \qw & \qw \\
& \dottedqw & \dottedgate{} & \dottedmeter & \cw
\gategroup{1}{3}{2}{3}{.7em}{-}}
\end{align*}
Next, we add random stabilizers before and after the syndrome extraction circuit to get
\begin{align*}
\Qcircuit @C=1em @R=.7em {
& \gate{S} & \gate{ } \qwx[1] & \gate{S'} & \qw\\
& \dottedqw & \dottedgate{} & \dottedmeter & \cw
\gategroup{1}{3}{2}{3}{.7em}{-}}
\end{align*}
This ensures that the state before and after the noisy measurement is in the syndrome space (\cref{thm:randomStabsProject}).
Finally, we add measurement RC to get
\begin{align*}
    \Qcircuit @C=1em @R=.7em {
& \gate{S} & \gate{ } \qwx[1] & \gate{S'} & \qw & \qw &\qw & \qw\\
& \dottedqw &  \dottedgate{} &\dottedgate{X^xZ^z} & \dottedmeter\cwx[1] & \dottedgate{Z^{z'}X^{-x}} & \dottedqw & \dottedqw \\
& & & & \control & \cgate{\textrm{add } x} & \cw & \cw,
\gategroup{1}{3}{2}{3}{.7em}{-}}
\end{align*}
This ensures that the measurement decouples the readout space from the encoded space and that measurement noise takes the form of probabilistically misreported syndromes.

Optionally, we can add operations to account for noise which accumulates on the encoded register during measurement (as measurement takes longer than unitary operations and the encoded register can undergo noise while idling).
We let these operations be denoted $\Phi$ and include a discussion about what form they might take in \cref{sec:idling}.
The simplest option would be to repeatedly apply random stabilizers.
Including this adjustment, the fully compiled syndrome extraction is

\begin{align}
    \Qcircuit @C=1em @R=.7em {
& \gate{S} & \gate{ } \qwx[1] & \gate{S'L} & \gate{\Phi} & \gate{S''L^\dagger}&\qw & \qw\\
& \dottedqw &  \dottedgate{} &\dottedgate{X^xZ^z} & \dottedmeter\cwx[1] & \dottedgate{Z^{z'}X^{-x}} & \dottedqw & \dottedqw \\
& & & & \control & \cgate{\textrm{add } x} & \cw & \cw,
\gategroup{1}{3}{2}{3}{.7em}{-}}\tag{LRC Syndrome Extraction Gadget}
\end{align}
We add a twirl of $\Phi$ over the logical Weyl operators to ensure that the noise introduced on the encoded register during idling is a stochastic Weyl channel.
Because $\Phi$ is an implementation of the identity channel, the twirled version will simply be a stochastic Weyl map.
Further, since the entire syndrome extraction is composed of Clifford operations, we can commute all of the stochastic Weyl noise terms that remain after compiling each element to one end of the circuit (e.g. the end) and combine them into a single stochastic Weyl error channel with bit flip errors on the on the readout register and reported syndrome.

Doing a complete syndrome extraction requires that we repeat this process $n-k$ times, once for each generator.
Because each repetition is a circuit that has the same form, i.e. composed of Clifford gates and a measurement, we can similarly move all of the remaining noise processes to the end of the sequence of syndrome dit extractions.
The effective noise on the encoded register after completing the syndrome extraction is then a stochastic Weyl map on the cospaces of the code, and the confusion matrix for the complete syndrome can be constructed by taking products of the elements of the confusion matrices for individual syndrome dit measurements, as in \cite{Beale2021}.
The syndrome dits may additionally be left in the incorrect state by the syndrome measurements, but this is accounted for by the reset performed on the readout registers between syndrome measurements.

\section{Discussion}\label{sec:Discussion}

In this section, we discuss some of the implementation details that are relevant to applying LRC effectively.
Specifically, we explore the considerations for compiling the operations introduced by LRC into a circuit in such a way that added noise and overhead are minimal to none in \cref{sec:overhead}, discuss some options for compiling idling registers in \cref{sec:idling}, briefly address some considerations regarding code switching in \cref{sec:codeSwitch}, and finally comment on the applicability of RC and LRC to single-shot algorithms in \cref{sec:singleShot}.

\subsection{Timing and Compilation}\label{sec:overhead}

We expect that the gates introduced by LRC will generally be transversal single-qudit gates (SQGs), which introduce less noise than other operations.
We also assume that each gadget begins and ends with a cycle of single-qudit gates (with appropriate modifications for state reset and measurement gadgets).
Under these assumptions, we can compile the gates introduced by LRC into the start and end of each gadget, and also compile the SQGs at the end of one gadget with the SQGs at the beginning of the next gadget.
LRC will therefore introduce minimal overhead in terms of gate count, and in the assumption that SQGs have gate-independent noise, will not introduce any additional noise to the system.

One concern that arises for implementing the stabilizer averaging is that each shot of a QEC circuit will be slightly different as different syndromes are measured and, accordingly, different recovery operations are applied.
Because \cref{thm:randomStabsProject} is proven using channels, the input state to the randomization does not impact the proof.
As such, the surrounding operations do not impact the effect of randomizing over stabilizers as long as the randomization occurs at the same location in the circuit for each shot.
The timing does not need to be exact in terms of wall clock time because delays are identity operations and inserting identity operations does not impact the implemented map.

An additional note is that while RC removes the coherent part of the noise much like LRC does, there are some instances where the coherent part of physical noise can improve the logical fidelity.
In such cases, removing the coherence at the physical level worsens performance.
In contrast, LRC can be used without a physical twirl, which will not remove the contribution of the physical coherent noise to the logical fidelity.

Moreover, LRC can be used in cases where RC would not be viable.
For example, stabilizer averaging can be used even in cases where the gate being implemented does not have an easily implementable twirling group, so for a gadget implementation of a non-Clifford gate like the Toffoli gate example described for the 3-qubit code in \cref{sec:coherentExample}, some coherences can be removed by LRC, while RC could not be implemented for this use-case.

\subsection{Idling}\label{sec:idling}

It is common for some subset of qudits in a computation to be idle while others undergo operations.
Perhaps most notably, measurements take longer to implement than gates so if the computation cannot proceed beyond the measurement until it has been completed, some registers may be left idle while the measurement is underway.
In \cite{Beale2023}, operations on unmeasured qudits are randomly compiled to make the effective noise on the unmeasured register follow a stochastic model.
We propose a similar implementation, where when qudits are idle, we apply first and last a random stabilizer, and in between we apply a refocusing sequence that persists for the remaining duration of the idle time.
There are many choices of refocusing techniques; one should be chosen to optimize performance based on the noise model that is present in the system and possibly on the encoding.
This process should be applied anywhere that qudits would otherwise idle in a computation and not just when measurements are in progress.
Because the ideal implementation of an idling sequence is the identity, twirling gates can and should be added to constrain the effective noise map to a stochastic Weyl channel.

\subsection{Code switching}\label{sec:codeSwitch}

While we do not look at code switching methods explicitly in this paper, it should be noted that computations which involve code switching can be randomly compiled by applying randomizations according to whatever the encoding is at any given step in a circuit.
For example, in switching from a code $C$ to a code $C'$, the minimal example would be to apply random stabilizers from code $C$ prior to switching and random stabilizers from code $C'$ after switching.
This ensures that any portion of the state outside of the codespace prior to the switch does not get mapped into the new code in a way that is disadvantageous and ensures that the initial state after switching to $C'$ is in the syndrome space of $C'$.
For any steps where the register is unencoded, randomized compiling should be used, and for any where the system is in a concatenated encoding, stabilizers (and logical operations) of the combined code should be used.

\subsection{Sampling Algorithms}\label{sec:singleShot}

Some quantum computing algorithms require only sampling from a distribution rather than estimating the full distribution.
In this section, we will argue that for such algorithms, LRC (and RC) can still be used.

For each compilation $c$ of a circuit, we have an output probability distribution $\vec{p}_c$.
The average over random compilations results in an average over output distributions, where the average output distribution $\vec{p}_{avg}$ approximates the ideal distribution $\vec{p}_{ideal}$, that is
\begin{align}
    \vec{p}_{ideal}\approx \vec{p}_{avg} = \frac{1}{|c|}\sum_{c}\vec{p}_c.
\end{align}
The probability of getting a ditstring $b$ when sampling a single shot from $\vec{p}_{avg}$ is given by
\begin{align}
    \vec{p}_{avg}(b) = \frac{1}{|\{c\}|}\sum_c \vec{p}_c(b)
    \label{eq:sampleFrAvg}
\end{align}
because the distributions $\vec{p}_c$ are mutually independent.
If we take a single shot with a single compilation sampled uniformly at random, the probability of returning ditstring $b$ is
\begin{align}
    \sum_c P(\vec{p}_c)\vec{p}_c(b) = \frac{1}{|\{c\}|}\sum_c \vec{p}_c(b),
\end{align}
where the equality comes from the different compilations being equiprobable (i.e. $P(\vec{p}_c) = 1/|\{c\}|$).
This is the same probability as in \cref{eq:sampleFrAvg}, so we can conclude that taking a single shot with a randomly sampled compilation is equivalent to taking a single shot from the average distribution which approximates the ideal distribution.

\section{Conclusion}

Noise in quantum error corrected systems often deviates from the common assumption of a stochastic Weyl model with ideal or probabilistically misreported syndrome measurements.
In this paper, we proposed a method that generalizes and combines existing compilation strategies in the context of fault-tolerant quantum computations to show how noise can be tailored to follow the desired form that is generally assumed.
Under our protocol, noisy quantum error corrected implementations are easier to model and the remaining noise is easier to correct using the standard Weyl recovery methods.

\bibliographystyle{unsrt}
\bibliography{library}

\begin{thebibliography}{10}

\bibitem{Darmawan2017}
Andrew~S. Darmawan and David Poulin.
\newblock Tensor-network simulations of the surface code under realistic noise.
\newblock {\em Physical Review Letters}, 119(4), jul 2017.

\bibitem{Fern2008}
Jesse Fern.
\newblock Correctable noise of quantum-error-correcting codes under adaptive
  concatenation.
\newblock {\em Phys. Rev. A}, 77:010301, Jan 2008.

\bibitem{Poulin2006}
David Poulin.
\newblock Optimal and efficient decoding of concatenated quantum block codes.
\newblock {\em Phys. Rev. A}, 74:052333, Nov 2006.

\bibitem{Iyer2017}
Pavithran~S. Iyer and David Poulin.
\newblock A small quantum computer is needed to optimize fault-tolerant
  protocols, 2017.

\bibitem{Huang2019}
Eric Huang, Andrew~C. Doherty, and Steven Flammia.
\newblock Performance of quantum error correction with coherent errors.
\newblock {\em Physical Review A}, 99(2), feb 2019.

\bibitem{Chamberland2017}
Christopher Chamberland, Joel Wallman, Stefanie Beale, and Raymond Laflamme.
\newblock Hard decoding algorithm for optimizing thresholds under general
  markovian noise.
\newblock {\em Physical Review A}, 95(4), apr 2017.

\bibitem{Skoric2023}
Luka Skoric, Dan~E. Browne, Kenton~M. Barnes, Neil~I. Gillespie, and Earl~T.
  Campbell.
\newblock Parallel window decoding enables scalable fault tolerant quantum
  computation, 2023.

\bibitem{Beale2023}
Stefanie~J. Beale and Joel~J. Wallman.
\newblock Randomized compiling for subsystem measurements, 2023.

\bibitem{gottesman2009introduction}
Daniel Gottesman.
\newblock An introduction to quantum error correction and fault-tolerant
  quantum computation, 2009.

\bibitem{rcWallman2016a}
Joel~James Wallman and Joseph Emerson.
\newblock {Noise tailoring for scalable quantum computation via randomized
  compiling}.
\newblock {\em Phys. Rev. A}, 94:052325, 2016.

\bibitem{Beale2018}
Stefanie~J. Beale, Joel~J. Wallman, Mauricio Guti{\'{e}}rrez, Kenneth~R. Brown,
  and Raymond Laflamme.
\newblock Quantum error correction decoheres noise.
\newblock {\em Physical Review Letters}, 121(19), nov 2018.

\bibitem{Jain2023}
Aditya Jain, Pavithran Iyer, Stephen~D. Bartlett, and Joseph Emerson.
\newblock Improved quantum error correction with randomized compiling, 2023.

\bibitem{Greene2021}
A.~Greene, M.~Kjaergaard, M.~E. Schwartz, G.~O. Samach, A.~Bengtsson,
  M.~O'Keeffe, D.~K. Kim, M.~Marvian, A.~Melville, B.~M. Niedzielski,
  A.~Vepsalainen, R.~Winik, J.~Yoder, D.~Rosenberg, S.~Lloyd, T.~P. Orlando,
  I.~Marvian, S.~Gustavsson, and W.~D. Oliver.
\newblock Error mitigation via stabilizer measurement emulation, 2021.

\bibitem{Anderson2014}
Jonas~T. Anderson, Guillaume Duclos-Cianci, and David Poulin.
\newblock Fault-tolerant conversion between the steane and reed-muller quantum
  codes.
\newblock {\em Physical Review Letters}, 113(8), aug 2014.

\bibitem{Wallman2018}
Joel~J. Wallman.
\newblock {Randomized benchmarking with gate-dependent noise}.
\newblock {\em Quantum}, 2:47, jan 2018.

\bibitem{WatrousText2018}
John Watrous.
\newblock {\em The Theory of Quantum Information}.
\newblock Cambridge University Press, 2018.

\bibitem{fulton2011representation}
W.~Fulton and J.~Harris.
\newblock {\em Representation Theory: A First Course}.
\newblock Graduate Texts in Mathematics. Springer New York, 2011.

\bibitem{Dugas2015}
Arnaud Carignan-Dugas, Joel~J. Wallman, and Joseph Emerson.
\newblock Characterizing universal gate sets via dihedral benchmarking.
\newblock {\em Phys. Rev. A}, 92:060302, 2015.

\bibitem{Beale2021}
Stefanie~J. Beale and Joel~J. Wallman.
\newblock Efficiently computing logical noise in quantum error-correcting
  codes.
\newblock {\em Physical Review A}, 103(6), jun 2021.

\end{thebibliography}

\end{document}